\documentclass[letterpaper, 10 pt, conference]{ieeeconf}  
\IEEEoverridecommandlockouts                             
\usepackage{cite}
\usepackage{amsmath,amssymb,amsfonts}
\usepackage{algorithmic}
\usepackage{graphicx}
\usepackage{amsmath}
\usepackage{textcomp}
\usepackage{xcolor}
\usepackage{stfloats} % Correct position of the tables
\usepackage{float}
\usepackage{accents}
\usepackage{tikz}
\usetikzlibrary{calc}
\usepackage{pgfplots}
\usepackage{hyperref}
\usetikzlibrary{shapes, arrows,automata}
\usetikzlibrary{shapes.multipart}
\usepackage[latin1]{inputenc}   
\DeclareMathAlphabet{\pazocal}{OMS}{zplm}{m}{n}
\usepackage{calrsfs}
\usepackage{dsfont}
\let\labelindent\relax
\usepackage{enumitem}
\usepackage{subfigure}
\usepackage{caption}
\usepackage{mathtools}
\usepackage{enumitem}

\usepackage{array}
\newcolumntype{P}[1]{>{\centering\arraybackslash}p{#1}}

\newtheorem{lemma}{Lemma}

\newtheorem{definition}{Definition}

\newtheorem{proposition}{Proposition}

\newtheorem{remark}{Remark}

\newcommand{\figurename}[1]{Fig.{#1}}

\newcommand{\ejo}{e^{j\omega}}
\DeclarePairedDelimiterX{\inp}[2]{\langle}{\rangle}{#1, #2}

\tikzstyle{sum} = [draw, fill=blue!20, circle, node distance=1cm]
\tikzstyle{input} = [coordinate]
\tikzstyle{pinstyle} = [pin edge={to-,thin,black}]

\usepackage{cleveref}
\usepackage{tabularx}
\usepackage{longtable} % tables that can span several pages
\usepackage{colortbl}
\usepackage[thinlines]{easytable}

\usepackage{transparent}

\usepackage{pdfpages} % To include a pdf file
\usepackage{afterpage}
\usepackage{lipsum} % DUMMY PACKAGE
\usepackage{fancyhdr} % For the headers
\usepackage{multirow} %Multi rows in tables
\usepackage{adjustbox} % Ridimensionate tables
\usepackage{amsmath} % To define subequations
\usepackage{comment} %To comment
\usepackage{color} %To highlight
\usepackage{soul} %To highlight
\usepackage{eurosym}  %To do euro symbol
\usepackage{amsfonts}

\usepackage{calrsfs}

\begin{document}
\title{\LARGE{\textbf{When Persistency is not Exciting in Data-Driven Predictive Control}}}

\author{Gianluca Giacomelli, Chuyu Lu, Siep Weiland, and Valentina Breschi
\thanks{The authors are with the Control Systems Group, Eindhoven University of Technology, 5612AZ Eindhoven, The Netherlands, (e-mails: \textsl{\{g.giacomelli, s.weiland, v.breschi\}@tue.nl}, \textsl{c.lu2@student.tue.nl}).}}

\maketitle

\maketitle

\begin{abstract}
    Understanding how to collect data that is \textquotedblleft meaningful\textquotedblright \ for control purposes is of paramount importance in data-driven control. While existing approaches have primarily relied on the satisfaction of a rank condition to assess the quality of an experiment, we show that satisfying it is not always sufficient to achieve satisfactory closed-loop performance. Focusing on scenarios where white-noise-like excitation cannot be used for data collection, we examine the frequency-domain implications of linear behavioral representation. This analysis demonstrates that data must both satisfy the rank condition and excite the frequencies of interest for the control goal, thereby laying the foundations for control-oriented experiment design tailored to direct data-driven approaches. These findings are reflected in our numerical results. Data-enabled predictive controllers that rely on data satisfying the rank condition but neglect the tracking control goal result in a closed-loop system that cannot track the selected reference.  
\end{abstract}

\begin{keywords}
    Data-Driven Control; Control-Oriented Data Collection; Predictive Control for Linear Systems; Learning-based Control.
\end{keywords}

\section{Introduction}\label{sec:Intro}
Collecting good data is quintessential for the success of all learning approaches, among which lie data-driven control strategies. For linear systems, this problem has been widely studied within the system identification community, also for control-oriented system identification \cite{gevers2006input,hjalmarsson2005experiment}. It is thus well known that data collection campaigns should unveil at least the information relevant to achieving one's control goal. 

While well-known within the system identification community, this insight has not yet been exploited to guide data collection experiments for direct data-driven control. Focusing on data-driven predictive strategies \cite{coulson2019data, berberich2020data,dorfler2022bridging,yin2021maximum,breschi2023data}, data collection is often practically performed using white noise as a probing input. Moreover, these signals are more hostile to the probed system's health, limiting their suitability in practice \cite{narasimhan2003multi,rivera2003plant}. Last but not least, it is known that using random noise, including but not limited to white noise (see \cite[Definition 5.6]{pintelon2012system}), for data collection allows for full spectral excitation, at the price of higher crest and time factors compared to other excitation signals and leakages in the frequency measurements \cite[Section 5.3]{pintelon2012system}. 

Furthermore, existing experiment design strategies aligned with the data-enabled predictive framework \cite{coulson2019data, berberich2020data,dorfler2022bridging,yin2021maximum,breschi2023data} still focus on ensuring that the data satisfy the rank condition required by Willems' Fundamental Lemma \cite{van2021beyond,alsalti2023design}, eventually further seeking non-parametric data-driven predictors that are accurate in simulation \cite{iannelli2021design}. Yet, none of these techniques explicitly accounts for the control objective that one aims to fulfill.

Focusing on tracking band-limited references and choosing band-limited inputs exciting the same frequency range, in this work, we leverage a frequency domain perspective on the non-parametric predictor featured in data-enabled predictive controllers to bridge between existing foundations for experiment design for system identification and the data-driven control framework. We hence investigate the frequency-domain implications of the time-domain Willems' Fundamental Lemma \cite{willems2005note}, rather than adopting a reformulation based on frequency-domain data (see, e.g., \cite{meijer2025frequency,ferizbegovic2021willems}). In particular, we analyze the non-parametric predictor featured in data-enabled predictive control from complementary perspectives, considering an ideal, infinite-length dataset, a practical finite one, as well as its links with state-space descriptions of the system dynamics in the frequency domain. In all these cases, we show that the data-driven predictor is not theoretically capable of predicting inputs (and outputs) at frequencies that were not explored during data collection.

All these results for band-limited excitation signals show the importance of collecting data by probing the system with inputs that are not only persistently exciting according to \cite[Section 3]{willems2005note}, but that also allow for a spectral excitation that is aligned with one's control goal. At the same time, they highlight that guaranteeing the rank condition in \cite[Section 3]{willems2005note}, conventionally considered as the sole measure of data quality in data-driven control, is not enough to design a performing data-driven predictive controller. These claims are supported by a set of numerical results, showcasing the importance of explicitly accounting for control goals in experiment design for data-driven control and paving the way for new, goal-driven experiment design approaches for data-driven control. Through a comparison with Subspace Predictive Control (SPC) \cite{favoreel1999spc}, we show that the performance of the latter is not influenced by the excitation band. This result is consistent with the claim that persistence of excitation (see \cite[Section 3]{willems2005note}) is enough when data are preprocessed, pinpointing the need for potentially different requirements for data collection if a preliminary preprocessing step is performed or not  (as in Data-enabled Predictive Control).

The paper is organized as follows. Section \ref{sec:Setting} introduces the setting and goal of the paper. Section \ref{sec:freq_ana} presents the frequency-domain analysis of the data-based behavioral predictor under limited-band excitation. Our theoretical analysis is supported in Section \ref{sec:Numerical} via a numerical simulation study. Our results showcase the importance of control-oriented data collections for a predictor without a preliminary preprocessing step. The paper closes with some concluding remarks and directions for future work in Section \ref{sec:conclusions}.

\paragraph*{Notation} We denote the set of integers, real numbers, and complex numbers as $\mathbb{Z}$, $\mathbb{R}$, and $\mathbb{C}$, respectively. The imaginary unit is $j\!\in\!\mathbb{C}$. Given a matrix $A\!\in\! \mathbb{C}^{m\times n}$, $A^{\top}$ and $A^{\dagger}$ are its transpose and Moore-Penrose pseudo-inverse, respectively. Moreover, $\mathrm{ker}(A)$ denotes its kernel and $\mathrm{img}(A)$ its image. If the matrix $A\in \mathbb{C}^{m\times m}$ is invertible, then its inverse is $A^{-1}$, while we say that this matrix is unimodular if its determinant is a non-zero constant. Zero matrices are represented with $\mathbf{0}$, without explicitly indicating their dimensions for ease of notation. Meanwhile, $I_m$ indicates an identity matrix having dimensions $m\times m$. Given a signal $\zeta_t\in\mathbb{R}^{n_{\zeta}}$, with $t\in \mathbb{Z}$, then $\zeta_{[0,t]}=\begin{bmatrix} \zeta_{0}^\top & \cdots & \zeta_{t}^\top \end{bmatrix}^\top$ and the Hankel matrix filled with $\zeta_{[0,t]}$ corresponds to 
\begin{equation}\label{eq:Hankel}
    \pazocal{H}_N(\!\zeta_{\left[0, t\right]\!}) \!=\! \begin{bmatrix}
                     \zeta_{0\!} & \zeta_{1\!} & \!\cdots\! & \zeta_{t-N\!}\\
                     \zeta_{1\!} & \zeta_{2\!} & \!\cdots\! & \zeta_{t-N+1\!}\\
                     \vdots & \vdots & \!\ddots\! & \vdots\\
                     \zeta_{N-1\!} & \zeta_{N\!} & \!\cdots\! & \zeta_{t\!}
\end{bmatrix}\!\in\!\mathbb{R}^{(n_{\zeta}N)\times (t-N+1)}.
\end{equation}
Moreover, we indicate the Discrete Time Fourier Transform (DTFT) of the time domain signal $\zeta_t$ as $Z(e^{j\omega})$, with $\omega \!\in\! (-\pi,\pi]$. Meanwhile, the convolution product between two discrete signals $z_t$ and $h_t$ at time $t\! \in\! \mathbb{Z}$ is compactly denoted as $(z\ast h)_t$, while the convolution product between two signals $Z(\ejo)$ and $H(\ejo)$ at a frequency $\omega\! \in\! (-\pi,\pi]$ is defined as $(Z\ast H)_{\ejo}$. Lastly, $|z|$ denotes the component-wise modulus of a complex vector $z \in \mathbb{C}^{m}$.

\section{Setting \& Goal}\label{sec:Setting}
Consider a controllable, Linear-Time Invariant (LTI) system described by the input/output relationship
\begin{equation}\label{eq:true_dom_t}
    y_t = G(q)u_t,~~~~ \forall t \in \mathbb{Z},
\end{equation}
where $u_t \!\in\! \mathbb{R}^m$, $y_t \!\in\! \mathbb{R}^{p}$ for all $t \in Z$, while\footnote{q is the shift operator, i.e., $q^iu_{t}=u_{t+i}$ for all $i \!\in\! \mathbb{Z}$.} $G(q) \!\in\! \mathbb{R}^{p \times m}$ is assumed to be \emph{unknown}. Let this system be Bounded-Input Bounded-Output (BIBO) stable. Our goal is for the system described by \eqref{eq:true_dom_t} to track a reference signal $y_t^{\mathrm{o}} \!\in\! \mathbb{R}^{p}$ with limited support in the frequency domain, namely
\begin{equation}\label{eq:ref_specification}
    |Y^{\mathrm{o}}(\ejo)| \neq  \mathbf{0},~~~\forall\omega \in \Omega^{\mathrm{o}}\cup (- \Omega^{\mathrm{o}}),
\end{equation}
with $ \Omega^{\mathrm{o}}\!=\![\omega^{\mathrm{o}}_{\mathrm{L}},\omega^{\mathrm{o}}_{\mathrm{H}}]\!\subset\!(0, \pi)$. To this end, suppose that the system has been excited with a quasi-stationary, \emph{band-limited} input sequence $u^d_{[0,T_d-1]}$, i.e., 
\begin{equation}\label{eq:input_spectra}
    |U^{d}(e^{j\omega})|=\mathbf{0},~~~\forall \omega \notin \Omega\cup (- \Omega), 
    \end{equation}
with $\Omega$ potentially different from $\Omega^{\mathrm{o}}$ in \eqref{eq:ref_specification}. Further assume that the corresponding outputs $y^d_{[0,T_d-1]}$ have been collected to form the dataset
\begin{equation}\label{eq:dataset_time}
    \pazocal D = \left\{u^d_{[0,T_d-1]},\, y^d_{[0,T_d-1]}\right\}.
\end{equation}
Let the order of the data-generating system described by \eqref{eq:true_dom_t} be equal to $n$ and assume that the input sequence $u^d_{[0,T_d-1]}$ is persistently exciting (PE) of order $N+n$ according to the following definition \cite{willems2005note}. 
\begin{definition}[Rank-based PE]\label{def:persistency_excitation_rank}
    The signal $u^d_{[0,T_d-1]}$ is persistently exciting of order $N+n$ if 
    \begin{equation}\label{eq:rank_condition}
        \mathrm{rank}(\pazocal{H}_{N+n}(u_{[0,T_d-1]}))=m(N+n). 
    \end{equation}
\end{definition}
\vspace{2mm}
Since the system is unknown, yet we have a set of data, our control task could be fulfilled by either identifying a model for the system and then designing an associated controller, or by using direct data-driven control strategies, skipping such explicit identification steps and using these data for control. We here employ the latter approach, focusing on data-enabled predictive control strategies. We hence aim to use the data to construct a non-parametric predictor of the form \cite{willems2005note}:
\begin{equation}\label{eq:Willems_model}
    \begin{bmatrix}
        \pazocal{H}_{N}(u_{[0,T_d-1]}^{d})\\
        \pazocal{H}_{N}(y_{[0,T_d-1]}^{d})
    \end{bmatrix}\alpha = \begin{bmatrix}
        u_{[0, N-1]}\\
        y_{[0, N-1]}
    \end{bmatrix}.
\end{equation}
Through this predictor, an input and the corresponding output trajectories of length $N$ are represented as linear combinations of the available data organized in Hankel matrices (see \eqref{eq:Hankel}), through the coefficients $\alpha\in\mathbb{R}^{T_d-N+1}$.

Based on our assumptions on the input signals used during data collection (see \eqref{eq:input_spectra}), our goals are to investigate $(i)$ whether satisfying condition \eqref{eq:rank_condition} alone is sufficient for the predictor in \eqref{eq:Willems_model} to describe the behavior of the controlled system at the control-relevant frequencies $\omega \in \Omega^{\mathrm{o}} \cup(-\Omega^{\mathrm{o}})$, and $(ii)$ how the choice of the excitation band $\Omega$ in \eqref{eq:input_spectra} impacts on the capability of a system controlled with Data-enabled Predictive Control (DeePC) \cite{coulson2019data,berberich2020data}.
\begin{remark}[Noise-free data]
    In this work, our analysis is limited to noiseless data, to better distinguish issues that can be induced by choices made during the data collection experiment from those caused by noise. The extension to the noisy case is thus left for future work.
\end{remark}

\section{Behind the Fundamental Lemma: Frequency Domain Implications}\label{sec:freq_ana}
While some data-driven predictive approaches \cite{coulson2019data, berberich2020data,dorfler2022bridging,yin2021maximum} look only at Definition \ref{def:persistency_excitation_rank} to assess the quality of data, \cite[Lemma 3]{breschi2023data} already mentions the need for spectral conditions which are more in line with what is usually employed in system identification (see, e.g., \cite{pintelon2012system,ljung1987theory,van2012subspace}). We hence start by stating the following spectral PE condition presented in \cite[Definition 13.1]{ljung1987theory}, which is also used in subspace identification (see \cite[Definition 5]{van2012subspace}).
\begin{definition}[Spectrum-based PE]\label{def:persistency_excitation_spec}
    A quasi-stationary signal $u^d_{[0,T_d-1]}$ is persistently exciting of order $N+n$ if its Power Spectral Density (PSD) matrix $\Phi^d_{u}(\ejo)\!\in\! \mathbb{C}^{m\times m}$ is positive definite for \emph{at least} $N+n$ frequencies $\omega\in(-\pi, \pi]$.
\end{definition}
\vspace{2mm}
Note that, based on \cite[Proof of Lemma 13.1]{ljung1987theory}, rank-based and spectral PE, introduced in Definitions \ref{def:persistency_excitation_rank} and \ref{def:persistency_excitation_spec}, respectively, are \emph{equivalent}. Since we assume our data to be PE of order $N+n$ according to the rank condition in \eqref{eq:rank_condition} as in our setting, this equivalence implies that the input signal used to construct $\pazocal{D}$ in \eqref{eq:dataset_time} excites at least $N+n$ frequencies in the band $\Omega \cup (-\Omega)$.

\begin{remark}[Beyond Definition \ref{def:persistency_excitation_spec}] 
    A more stringent version of Definition \ref{def:persistency_excitation_spec} is provided in \cite[Definition 13.2]{ljung1987theory}, imposing the input PSD $\Phi^d_u(e^{j\omega})$ to be positive definite for almost all frequencies. However, since our exciting input has limited frequency support, we do not consider it in our work.
\end{remark}

\subsection{Revisiting the non-parametric predictor as a Finite Impulse Response}
The properties of $u^d_{[0,T_d-1]}$, i.e., the fact that it satisfies \eqref{eq:input_spectra} and excites at least $N+n$ frequencies, allow us to interpret the input data in $\pazocal{D}$ as complex-value signals, as
\begin{equation}\label{eq:input_complex}
                u_t^{d\!}=\!\sum_{k=1}^{E}U_k^d e^{j\omega_kt}, ~\forall t \!\in\! [0,T_d\!-\!1],\mbox{ with } \omega_k \in \Omega,                        %y_t^{d\!}&=\!\!\!\sum_{k=1}^{E}\!Y_k^d e^{j\omega_kt}\!\!=\!\!\!\!\sum_{k=1}^{E}\! G(e^{j\omega_k}) U_k^de^{j\omega_kt\!} , ~\!\forall t \!\in\! [0,T_d\!-\!1],%\!=\!0,\!\ldots,\! T_d\!-\!1,
        %\label{eq:output_complex}
\end{equation}
with $E$ satisfying 
\begin{equation}\label{eq:number_excited_freq}
    \left\lceil \frac{N+n}{2}\right\rceil\leq E,
\end{equation}
and 
\begin{equation}
    U_k^d=\begin{bmatrix}
        |U_k^{d,1}|e^{j\phi_{1,k}} & \cdots &  |U_k^{d,m}|e^{j\phi_{m,k}}
    \end{bmatrix}^\top,
\end{equation}
compactly denoting the amplitude and phase of the harmonics across input channels. Note that the lower bound in \eqref{eq:number_excited_freq} is dictated by the spectral-based PE condition and the symmetry of the harmonics' spectra in $(-\pi,\pi)$.
% while the upper bound  \le \left\lfloor \frac{T_d}{2}  \right\rfloor follows from the Nyquist-Shannon sampling theorem \cite{shannon2006communication}.

Let us now define $\tilde{u}^d_{t}= u^d_{t+T_d-N}$ and $\tilde{y}^d_{t}= y^d_{t+T_d-N}$. Accordingly, by using the predictor in \eqref{eq:Willems_model}, the following holds:
\begin{subequations}
    \begin{align}
        u_t&
        =\sum_{i=0}^{T_d-N}\tilde{u}^d_{t-i} \alpha_{i} = (u^d \ast f)_{t},~\forall t \in[0,N-1],\label{eq:prediction_complex_input}\\
        y_t&
        =\sum_{i=0}^{T_d-N}\tilde{y}^d_{t-i} \alpha_{i} = (y^d \ast f)_{t},~\forall t \in[0,N-1].\label{eq:prediction_complex_output}
    \end{align}
Hence, the reconstructed input/output pair $(u_t,y_t)$ can be seen as responses of a Finite Impulse Response (FIR) filter, whose impulse response is given by
\begin{equation}\label{eq:impulse_response_FIR}
    f_{i}=\begin{cases}
        \alpha_{i},&\mbox{~if~}i\in[0,T_d-N],\\
        0,& \mbox{~otherwise.}
    \end{cases}
\end{equation}
\end{subequations}
This alternative perspective on \eqref{eq:Willems_model} is the key enabler to shift from the time to the frequency domain and, thus, to analyze the impact of input data exciting only a limited set of frequencies. Note that, from this point onward, we will always assume the initial state of \eqref{eq:true_dom_t} is zero.

\subsection{Infinite-length predictor with infinite signals}\label{subsec:infinite_length_signal_ana}
Let us first focus on the ideal case in which both the horizon $N$ is infinite and $T_d \gg N$, i.e., $N \rightarrow \infty$ and $T_d-N \rightarrow \infty$. In this scenario, \eqref{eq:prediction_complex_input}-\eqref{eq:impulse_response_FIR} can be rewritten as follows:
    \begin{equation}\label{eq:prediction_complex_infinity}
        u_t
        =\sum_{i=0}^{\infty}\tilde{u}^d_{t-i} \alpha_{i},~~~~~
        y_t
        =\sum_{i=0}^{\infty} \tilde{y}^d_{t-i}\alpha_{i}.
    \end{equation}
In this case, the elements of $\alpha$ in \eqref{eq:Willems_model} characterize an Infinite Impulse Response (IIR) filter, whose Frequency Response Function (FRF) is
\begin{equation}\label{eq:IIR}
    F(\ejo)=\sum_{i=0}^{\infty}\alpha_i  e^{-j\omega i}.
\end{equation}
Let us now assume such an IIR filter to be Bounded Input Bounded Output (BIBO) stable, leading trajectories predicted according to \eqref{eq:prediction_complex_infinity} to be BIBO stable. Moreover, suppose that $|F(e^{j\omega})|\neq 0$ for all $\omega \in \Omega$. These assumptions further allow us to express \eqref{eq:prediction_complex_infinity} in the frequency domain as:
\begin{subequations}
\begin{align}
    U(\ejo)&=F(\ejo) \tilde{U}^d(\ejo),\label{eq:prediction_freq_input}\\
    Y(\ejo)&=F(\ejo) \tilde{Y}^d(\ejo).\label{eq:prediction_freq_output}
\end{align}
\end{subequations}
These relations already show that the IIR filter in \eqref{eq:IIR} can only modify (in amplitude and advanced phases) the FRFs at the same frequencies excited by the input used for data collection, i.e., $\omega \in \Omega$. Meanwhile, for $\omega \notin \Omega$, the following will hold
\begin{equation}
    U(\ejo)=\boldsymbol{0},~~~
    Y(\ejo)=\boldsymbol{0},
\end{equation}
since $\tilde{U}^{d}(\ejo)=\mathbf{0}$ and $\tilde{Y}^{d}(\ejo)=\mathbf{0}$.

As $\tilde{U}^{d}(e^{j\omega})$ and $\tilde{Y}^{d}(e^{j\omega})$ are generated by \eqref{eq:true_dom_t}, then they also satisfy the following kernel relation in the frequency-domain:
\begin{equation}\label{eq:kernel_dataset_in_band}
    \underbrace{\begin{bmatrix}
        -Q(e^{j\omega}) & P(e^{j\omega})
    \end{bmatrix}}_{=K(e^{j\omega})}\begin{bmatrix}
    \tilde{U}^{d}(e^{j\omega}) \\
    \tilde{Y}^{d}(e^{j\omega})
    \end{bmatrix}=\mathbf{0}, ~~\omega \in \Omega,
\end{equation}
where $Q(\ejo)\in \mathbb{R}^{p\times m}$ and $P(\ejo)\in \mathbb{R}^{p\times p}$ are left co-prime polynomials and $P(\ejo)$ is non-singular due to \cite[Proposition VIII.6]{willems1991paradigms}, satisfying 
\begin{equation}\label{eq:FRF_fac}
   G(\ejo)=P^{-1}(\ejo)Q(\ejo),~\forall\omega\in\Omega.
\end{equation} 
We can now use \eqref{eq:prediction_freq_input}-\eqref{eq:prediction_freq_output} to replace the DTFTs of the data with the reconstructed inputs and outputs, i.e.,
\begin{align}
	&F^{-1}(e^{j\omega})\begin{bmatrix}-Q(e^{j\omega}) &P(e^{j\omega})\end{bmatrix}\begin{bmatrix}
U(e^{j\omega}) \\
Y(e^{j\omega})\end{bmatrix}\nonumber\\
&~~~~~\!=\begin{bmatrix}-Q^{F}(e^{j\omega}) & P^{F}(e^{j\omega})\end{bmatrix}\begin{bmatrix}
U(e^{j\omega}) \\
Y(e^{j\omega})\end{bmatrix},~~\omega \in \Omega,
\end{align}
with the invertibility of $F(e^{j\omega})$ guaranteed by its properties (i.e., BIBO stability and non-zero modulus in band). This relationship implies that the input/output data reconstructed via \eqref{eq:Willems_model} are associated with a transfer function 
\begin{align}
    G^F(e^{j\omega})&=(F^{-1}(e^{j\omega})P(e^{j\omega}))^{-1}F^{-1}(e^{j\omega})Q(e^{j\omega}) \nonumber\\
    &=P^{-1}(e^{j\omega})Q(e^{j\omega})=G(\ejo),~~\omega \in \Omega.
\end{align}
Meanwhile, a kernel relation of the kind
\begin{equation}
    \begin{bmatrix}-\tilde{Q}(e^{j\omega}) &\tilde{P}(e^{j\omega})\end{bmatrix}\begin{bmatrix}
    \tilde{U}^{d}(e^{j\omega}) \\
    \tilde{Y}^{d}(e^{j\omega})
    \end{bmatrix}=\mathbf{0}, ~~\omega \notin \Omega,
\end{equation}
is trivially satisfied irrespective of $\tilde{Q}(e^{j\omega})$ and $\tilde{P}(e^{j\omega})$ for all $\omega \notin \Omega$ since
\begin{equation*}
    \tilde{U}^{d}(e^{j\omega})=\mathbf{0}, \qquad \tilde{Y}^{d}(e^{j\omega})=\mathbf{0},~~\forall \omega \notin \Omega.    
\end{equation*}
The matrices $\tilde{Q}(e^{j\omega})$ and $\tilde{P}(e^{j\omega})$ are thus not unique, resulting in a kernel representation that might not be equivalent (even up to a unimodular matrix transformation $T(e^{j\omega})$~\cite[Proposition III.3]{willems1991paradigms}) with that of the actual system within the non-excited frequency range.
%which, according to \cite[Proposition VIII.6]{willems1991paradigms}, defines the system behavior for the frequencies inside $\Omega$. Indeed, the polynomial matrices $P(\ejo)$ and $Q(\ejo)$ characterize a kernel $K(\ejo)$ composed as
%\begin{subequations}
%\begin{equation}
%    K(\ejo)=\begin{bmatrix}
%        -Q(\ejo) & P(\ejo)
%    \end{bmatrix},
%\end{equation}
%that is full row rank and minimal based on \cite[Proposition III.3]{willems1991paradigms}, 

Following a similar reasoning, we can now analyze the image representation of the system in the frequency domain. In particular, $\tilde{U}^{d}(\ejo)$ and $\tilde{Y}^{d}(\ejo)$ satisfy the image relationship:
\begin{equation}\label{eq:image_representation}
	\begin{bmatrix}
		\tilde{U}^{d}(\ejo)\\
		\tilde{Y}^{d}(\ejo)
	\end{bmatrix}=\underbrace{\begin{bmatrix}
		M(\ejo)\\
		N(\ejo)
	\end{bmatrix}}_{=\pazocal{I}(\ejo)}\beta,
\end{equation} 
where $\beta \in \mathbb{R}^{m}$ and $M(\ejo)$ and $N(\ejo)$ are right co-prime polynomials satisfying
\begin{equation}
	G(\ejo)=N(\ejo)M^{-1}(\ejo).
\end{equation}
Note that $M(\ejo)$ is non-singular thanks to \cite[Proposition VI.3]{willems1991paradigms}. Using the relations in \eqref{eq:prediction_freq_input}-\eqref{eq:prediction_freq_output}, \eqref{eq:image_representation} can be equivalently re-written as
\begin{equation}
	\begin{bmatrix}
		F^{-1}(e^{j\omega})U(e^{j\omega})\\
		F^{-1}(e^{j\omega})Y(e^{j\omega})
	\end{bmatrix}=\begin{bmatrix}
		M(e^{j\omega})\\
		N(e^{j\omega})
	\end{bmatrix}\beta,~~~\omega \in \Omega,
\end{equation}
leading to 
\begin{equation}
		\begin{bmatrix}
		U(e^{j\omega})\\
		Y(e^{j\omega})
	\end{bmatrix}=\begin{bmatrix}
		M(e^{j\omega})F(e^{j\omega})\\
		N(e^{j\omega})F(e^{j\omega})
	\end{bmatrix}\beta
    ,~~~\omega \in \Omega.
\end{equation}
In turn, this indicates that the new data are generated by the transfer function
\begin{align}
\nonumber \bar{G}^{F}(e^{j\omega})&=N(e^{j\omega})F(e^{j\omega})(M(e^{j\omega})F(e^{j\omega}))^{-1}\\
&=G(e^{j\omega}),~~\omega \in \Omega.
\end{align}
Meanwhile, for $\omega \notin \Omega$, the following relation holds:
\begin{equation}\label{eq:image_out_of_band}
	\mathbf{0}=\begin{bmatrix}
		\tilde{M}(e^{j\omega})\\
		\tilde{N}(e^{j\omega})
	\end{bmatrix}\beta,
\end{equation} 
which implies $\beta=\mathbf{0}$, irrespective of $\tilde{N}(e^{j\omega})$ and $\tilde{M}(e^{j\omega})$.

By defining the input/output set of the frequency responses generated by the system as $\pazocal{S}(\omega)\subseteq\mathbb{C}^{m+p}$, we hence have that $S(e^{j\omega})=\mathrm{col}(U(e^{j\omega}),Y(e^{j\omega})) \in  \mathcal{B}(\!e^{j\omega\!})$ with
\begin{equation}\label{eq:behavior_freq}
    \mathcal{B}(\!e^{j\omega\!}) \!= \!\{\!S\!\in\! \pazocal{S}(\omega)\!~\mathrm{s.t.}\!~ S\! \in\! \mathrm{ker}(\!K(e^{j\omega\!})\!)\! \Leftrightarrow\! S\! \in\! \mathrm{img}(\pazocal{I}(e^{j\omega\!})\!)\!\}\!,\! 
\end{equation}
for all $\omega \in \Omega$. Meanwhile, even if the PE conditions in Definitions~\ref{def:persistency_excitation_rank}-\ref{def:persistency_excitation_spec} are satisfied, the FRF reconstructed outside the excited frequencies will be null. We formalize these results into the following lemma.

%On the other hand, every pair of matrices $\tilde{P}(\ejo)$ and $\tilde{Q}(\ejo)$ trivially characterizes the kernel in \eqref{eq:kernel_data} for the frequencies not excited by the input driving the experiment, as the kernel equation becomes
%\begin{equation}\label{eq:kernel_undef}
%    \begin{bmatrix}
%         -\tilde{Q}(\ejo) & \tilde{P}(\ejo)   
%     \end{bmatrix} \begin{bmatrix}
%         \mathbf{0}\\
%         \mathbf{0}
%     \end{bmatrix}=\mathbf{0},~\forall\omega\notin\Omega,
%\end{equation}
%meaning that the data-driven representation \eqref{eq:Willems_model}-\eqref{eq:alpha} is not guaranteed to be linked with the system behavior for frequencies outside the band, as, in this context, $\tilde{Q}(\ejo)$ and $\tilde{P}(\ejo)$ may not be input-output equivalent to $Q(\ejo)$ and $P(\ejo)$, i.e., \eqref{eq:same_ploy_input_output} may not be satisfied out of the band $\Omega$, therefore \eqref{eq:behavior_poly_inband} and \eqref{eq:behavior_image_inband} are not guaranteed to hold, even when data are PE according to Definition \ref{def:persistency_excitation_rank}-\ref{def:persistency_excitation_spec}.
\begin{lemma}
    Let the IIR filter characterized by \eqref{eq:IIR} be BIBO stable and the collected signals in $\pazocal{D}$ be persistently exciting according to Definition \ref{def:persistency_excitation_rank}, while satisfying \eqref{eq:input_spectra}. Then, for $T_d\gg N$ and $N \rightarrow \infty$, the signals reconstructed via \eqref{eq:Willems_model} belong to the behavior $\mathcal{B}(\ejo)$ in \eqref{eq:behavior_freq} for all $\omega \in \Omega$. Instead, the input/output trajectories associated with $\omega \notin \Omega$ are null.
\end{lemma}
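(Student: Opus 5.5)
The plan is to formalize the chain of arguments developed just before the statement, in three steps: recast the predictor as a filter, use the convolution theorem, then treat the excited and non-excited bands separately.

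\emph{Step 1 (FIR/IIR form).} Starting from \eqref{eq:Willems_model}, I rewrite each block row of the Hankel matrices as a convolution. This gives \eqref{eq:prediction_complex_input}--\eqref{eq:prediction_complex_output}, with impulse response \eqref{eq:impulse_response_FIR}. Letting $N\rightarrow\infty$ and $T_d-N\rightarrow\infty$, I obtain the IIR description \eqref{eq:prediction_complex_infinity} with FRF \eqref{eq:IIR}. The BIBO stability assumption makes $\{\alpha_i\}$ absolutely summable, so $F(\ejo)$ is well defined and continuous. The convolution theorem for the DTFT of quasi-stationary signals then yields the products \eqref{eq:prediction_freq_input}--\eqref{eq:prediction_freq_output}. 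The time shift defining $\tilde{u}^d,\tilde{y}^d$ only multiplies the spectra by $e^{j\omega(T_d-N)}$, so the spectral supports of $\tilde U^d,\tilde Y^d$ coincide with those of $U^d,Y^d$.

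\emph{Step 2 (out of band).} For $\omega\notin\Omega\cup(-\Omega)$, \eqref{eq:input_spectra} gives $\tilde U^d(\ejo)=\mathbf{0}$. Since the system starts at rest and is BIBO stable, $\tilde Y^d(\ejo)=G(\ejo)\tilde U^d(\ejo)=\mathbf{0}$ as well. Equations \eqref{eq:prediction_freq_input}--\eqref{eq:prediction_freq_output} then force $U(\ejo)=Y(\ejo)=\mathbf{0}$, whatever $F$ is, which proves the second claim. I will also note that this is consistent with \eqref{eq:image_out_of_band}, which forces $\beta=\mathbf{0}$.

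\emph{Step 3 (in band).} For $\omega\in\Omega$, I invoke the kernel representation \eqref{eq:kernel_dataset_in_band}, with $P$ nonsingular by \cite[Proposition VIII.6]{willems1991paradigms}. I substitute $\tilde U^d=F^{-1}U$ and $\tilde Y^d=F^{-1}Y$, which needs $F(\ejo)\neq0$ on $\Omega$. Because $F$ is a scalar, it commutes with $P$ and $Q$, so $K(\ejo)\,\mathrm{col}(U,Y)=\mathbf{0}$, i.e. $S(\ejo)\in\ker K(\ejo)$. Symmetrically, the image form \eqref{eq:image_representation}, with $M$ nonsingular by \cite[Proposition VI.3]{willems1991paradigms}, gives $\mathrm{col}(U,Y)=\pazocal{I}(\ejo)F(\ejo)\beta\in\mathrm{img}\,\pazocal{I}(\ejo)$. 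The equivalence of the kernel and image descriptions for coprime factorizations of the same $G$ then places $S(\ejo)$ in $\mathcal{B}(\ejo)$ of \eqref{eq:behavior_freq}. The role of PE (Definition \ref{def:persistency_excitation_rank}, equivalently Definition \ref{def:persistency_excitation_spec}) is to guarantee that enough frequencies in $\Omega$ are excited ($E$ as in \eqref{eq:number_excited_freq}). This ensures that \eqref{eq:Willems_model} admits a solution $\alpha$ for in-band trajectories, so the set of reconstructed signals is nonempty and spans the in-band behavior.

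\emph{Main obstacle.} The delicate point is the passage to the limit together with the nonvanishing of $F$ on $\Omega$. BIBO stability gives existence of $F$, but not $F(\ejo)\neq0$. The lemma's hypotheses list only stability, while the preceding discussion also assumes $|F|\neq0$ on $\Omega$. I would handle this first by adopting that standing assumption explicitly. As a fallback, I would observe that membership $\mathrm{col}(U,Y)\in\ker K$ does not actually require inverting $F$: left-multiplying \eqref{eq:kernel_dataset_in_band} by the scalar $F(\ejo)$ directly gives $K\,\mathrm{col}(U,Y)=\mathbf{0}$, and $F$ may then vanish at some frequencies. A second technical point is that the band-limited (line-spectrum, per \eqref{eq:input_complex}) data should be interpreted with DTFTs in the distributional or quasi-stationary sense. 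I would treat each harmonic $e^{j\omega_k t}$ separately, which reduces every frequency-domain identity to evaluating $F$, $P$, $Q$, $M$, $N$ at finitely many $\omega_k$.
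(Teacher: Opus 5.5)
Your proposal is correct and is essentially the paper's own argument: the paper's proof just points to the chain from \eqref{eq:input_complex} to \eqref{eq:behavior_freq}, namely the filter interpretation of $\alpha$, the product form of the DTFTs, the vanishing spectra outside the excited band, and in-band membership via the kernel and image representations. Your fallback of multiplying \eqref{eq:kernel_dataset_in_band} by the scalar $F(\ejo)$ rather than inverting it is a small but genuine improvement, since it removes the need for $|F(\ejo)|\neq 0$ on $\Omega$, which the paper uses but omits from the lemma's hypotheses. By contrast, your closing remark that PE makes the reconstructed signals span the in-band behavior is neither required by the statement nor established by your argument (for multisine data as in \eqref{eq:input_complex} the reconstructed spectra live only on the excited lines $\omega_k$), so it is best dropped.
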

\begin{proof}
    The proof follows from \eqref{eq:input_complex} to \eqref{eq:behavior_freq}.
\end{proof}
Note that this implies that, in this ideal case, the data-driven predictor will not be able to reconstruct inputs in the band $\Omega^{\mathrm{o}}$ (see \eqref{eq:ref_specification}), if such a frequency band is not excited in the first place, even if such inputs are needed to achieve the tracking control task. Accordingly, when using white noise as the probing signal and, hence, exciting all frequencies, one will be able to reconstruct inputs for all $\Omega^{\mathrm{o}}$.

\subsection{Finite-length predictor with windowed signals}
Let us now consider the more realistic case in which both $N$ and $T_d$ are finite. To this end, we can equivalently rewrite \eqref{eq:prediction_complex_input}-\eqref{eq:prediction_complex_output} as follows
\begin{subequations}
\begin{align}
	& u_t^{\infty} w_t=\sum_{i=0}^{+\infty} \alpha_i \tilde{w}_i \tilde{u}_{t-i}^{d,\infty},\label{eq:window_infinity_input}\\
& y_t^{\infty} w_t=\sum_{i=0}^{+\infty} \alpha_i \tilde{w}_i \tilde{y}_{t-i}^{d,\infty},\label{eq:window_infinity_output}
\end{align}
\end{subequations}
as 
\begin{equation}\label{eq:windowed_signals_traj_and_data}
\begin{aligned}
        &u_t\!=\!u_t^{\infty} w_t,~~~~~~y_t\!=\! y_t^{\infty} w_t,\\        
        &\tilde{u}_{t-i}^{d}\!=\!\tilde{w}_i \tilde{u}_{t-i}^{d,\infty},~~\tilde{y}_{t-i}^{d}\!=\!\tilde{w}_i \tilde{y}_{t-i}^{d,\infty},  
\end{aligned}
\end{equation}
for all $t\in[0, N-1]$, $i\in[0,T_d-N]$, with $(\tilde{u}_{t}^{d,\infty},\tilde{y}_{t}^{d,\infty})$ input/output pairs from an infinite-length dataset, and $(u_{t}^{\infty},y_{t}^{\infty})$ pairs of an infinite input/output trajectory, while 
\begin{equation}
	w_{t}=\begin{cases}
		1, &\mbox{~if~} t \in [0,N-1],\\
		0, & \mbox{~otherwise,}
	\end{cases}
\end{equation}
and
\begin{equation}
	\tilde{w}_{i}=\begin{cases}
		1, &\mbox{~if~} i \in [0,T_d-N],\\
		0, & \mbox{~otherwise.}
	\end{cases}
\end{equation}
Under the assumption that the Finite Impulse Response (FIR) filter characterizing \eqref{eq:window_infinity_input}-\eqref{eq:window_infinity_output} is BIBO stable, then we can express \eqref{eq:window_infinity_input}-\eqref{eq:window_infinity_output} in the frequency domain as follows
\begin{subequations}
	\begin{align}
		& U(e^{j\omega})=F^{W}(e^{j\omega})\tilde{U}^{d,\infty}(e^{j\omega}),\label{eq:prediction_freq_input_window}\\
		& Y(e^{j\omega})=F^{W}(e^{j\omega}) \tilde{Y}^{d,\infty}(e^{j\omega}),\label{eq:prediction_freq_output_window}
	\end{align}
\end{subequations}
with
\begin{equation}
F^{W}(e^{j\omega})=\sum_{i=0}^{+\infty}\alpha_{i}\tilde{w}_{i}e^{-j\omega i},
\end{equation}
and 
\begin{equation}
    	U(e^{j\omega})=(U^{\infty}\ast \pazocal{W}){_{\ejo}},\quad
    Y(e^{j\omega})=(Y^{\infty}\ast \pazocal{W}){_{\ejo}},
\end{equation}
being the FRFs of the windowed input/output reconstructed signals, where
\begin{equation}\label{eq:window_freq}
     \pazocal{W}(\ejo)=e^{-j\omega \frac{N-1}{2}}\frac{\mathrm{sin}\left(\frac{\omega N}{2}\right)}{\mathrm{sin}\left(\frac{\omega}{2}\right)}.
\end{equation}
From \eqref{eq:prediction_freq_input_window}-\eqref{eq:prediction_freq_output_window}, it can be easily seen once more that, according to \eqref{eq:input_spectra},
$U(e^{j\omega})=\mathbf{0}$ and $Y(e^{j\omega})=\mathbf{0}$ for all frequencies $\omega \notin \Omega$.
Note that, under the assumption that $F^{W}(e^{j\omega})$ has non-zero modulus for $\omega\in\Omega$, a similar conclusion to the one of Section \ref{subsec:infinite_length_signal_ana} holds. Hence, the predictor \eqref{eq:Willems_model} constructed with PE data according to Definition \ref{def:persistency_excitation_rank} reconstructs windowed signals that belong to the behavior $\mathcal{B}(\ejo)$ for all $\omega \in \Omega$ and null input/output trajectories for $\omega \notin \Omega$.

\subsection{Links with frequency-domain state-space representations}\label{sec:evaluation_of_FRF}
Starting from a state-space representation of the system inspired by that adopted in \cite[Section 2.1]{vakilzadeh2015experiment}, we now aim to show if our previous considerations still hold. To this end, let us consider the following state-space representation of the system in the frequency domain:
\begin{subequations}
    \begin{align}
        e^{j\omega} X(\ejo) &= A X(\ejo)+B U(\ejo),\label{eq:state_eq_freq}\\
        Y (\ejo) & = C  X(\ejo) + D U(\ejo),\label{eq:output_eq_freq}
    \end{align}
where $X(\ejo)\in\mathbb{R}^n$ is a minimal state of the system in the frequency domain, and the matrices $\{A,B,C,D\}$ satisfy 
\begin{equation}
    G(\ejo) = C(\ejo I_n-A)^{-1}B+D,~~~\forall \omega \in (-\pi,\pi].
\end{equation}
\end{subequations}
Note that $G(e^{j\omega})$ is linked with the representation in \eqref{eq:true_dom_t}, which is obtained by considering $q=e^{j\omega}$. 

By assuming that the initial state of the system is zero and that the system is observable, \eqref{eq:state_eq_freq}-\eqref{eq:output_eq_freq} can be used recursively, leading to 
\begin{subequations}
\begin{equation}\label{eq:data_equation_state}
    W_L(\ejo)\! \otimes\! Y(\ejo) \!=\! \pazocal{O}_{L} X(\ejo) \! +\!\pazocal{T}_L(W_L(\ejo)\!\otimes\! U(\ejo)),
\end{equation}
where 
\begin{equation}
    W_L(\ejo) = \begin{bmatrix}
        1&\cdots & e^{j\omega(L-1)}
    \end{bmatrix}^\top,
\end{equation}
$X(\ejo)$ is defined as
\begin{equation}\label{eq:FRF_data_state}
    X(\ejo) = (\ejo-A)^{-1}B U(\ejo),
\end{equation}
\end{subequations}
and $\pazocal{O}_{L}$ and $\pazocal{T}_L$ are the observability and Toeplitz matrices associated with \eqref{eq:state_eq_freq}-\eqref{eq:output_eq_freq}, i.e.,
\begin{equation*}
    \pazocal{T}_{L}\!=\!\! 
    \begin{bmatrix}
        D & \mathbf{0} & \dots & \dots & \mathbf{0} \\
        CB & D & \dots & \dots & \mathbf{0} \\
        CAB & CB & D & \dots & \mathbf{0}\\
        \vdots & \vdots & \ddots & \ddots & \vdots \\
        CA^{L-2}B\!\! & \dots & \dots & CB & D
    \end{bmatrix}\!\!,~
    \pazocal{O}_{L} \!=\!\begin{bmatrix}
        C\\
        CA\\
        CA^2\\
        \vdots\\
        CA^{L-1}
    \end{bmatrix}\!\!.
\end{equation*}
By further exploiting the fact that there always exists a matrix $\pazocal{C}_{det}$ such that
\begin{equation}\label{eq:state_past_data_time}
    x_t = \underbrace{\begin{bmatrix}
        \pazocal{C}_u & \pazocal{C}_y
    \end{bmatrix}}_{=\pazocal{C}_{\mathrm{det}}} \begin{bmatrix}
        u_{[t-\rho,t-1]}\\
        y_{[t-\rho,t-1]}
    \end{bmatrix},
\end{equation}
for $\rho\geq n$ (see \cite[Remark 1]{breschi2023data}), we can further recast $X(e^{j\omega})$ as
\begin{equation}\label{eq:state_past_data_freq}
    X(\ejo) \!=\! \pazocal{C}_u (W_{\rho}(\ejo)\!\otimes\! U(\ejo)) + \pazocal{C}_y (W_{\rho}(\ejo)\!\otimes\! Y(\ejo)),
\end{equation}
with
\begin{equation}
    W_{\rho}(\ejo) = \begin{bmatrix}
        e^{-j\omega\rho}&\cdots & e^{-j\omega}
    \end{bmatrix}^\top.
\end{equation}
Replacing \eqref{eq:state_past_data_freq} in \eqref{eq:data_equation_state}, and right multiplying both the left and right hand-side for
\begin{equation}
    W_{T_d-N}(\ejo)\!=\!\begin{bmatrix}
        1 &\!\! \cdots\!\! & e^{j\omega(T_d-N)}
    \end{bmatrix},~~\mbox{with}~N\!=\!L+\rho,
\end{equation}
\eqref{eq:data_equation_state} becomes 
\begin{align}\label{eq:no_data_eq_extended}
        (W_L(e^{j\omega}) \!&\otimes\! Y(\ejo)) W_{T_d-N}(e^{j\omega}) =\nonumber\\         
        &\pazocal{O}_{L} \pazocal{C}_{\mathrm{det}} \begin{bmatrix}
W_\rho(e^{j\omega}) \otimes U(e^{j\omega})\\
W_\rho(e^{j\omega}) \otimes Y(e^{j\omega})
\end{bmatrix}W_{T_d-N}(e^{j\omega})\nonumber\\
&+ \pazocal{T}_L(W_L(e^{j\omega})\!\otimes\! U(\ejo))W_{T_d-N}(e^{j\omega}).   
\end{align}
Since the available data are generated by \eqref{eq:true_dom_t}, then 
\begin{align}\label{eq:data_eq_extended}
        (W_L(e^{j\omega}) \!&\otimes\! Y^d(\ejo)) W_{T_d-N}(e^{j\omega}) =\nonumber\\         
        &\pazocal{O}_{L} \pazocal{C}_{\mathrm{det}} \begin{bmatrix}
W_\rho(e^{j\omega}) \otimes U^d(e^{j\omega})\\
W_\rho(e^{j\omega}) \otimes Y^d(e^{j\omega})
\end{bmatrix}W_{T_d-N}(e^{j\omega})\nonumber\\
&+ \pazocal{T}_L(W_L(e^{j\omega})\!\otimes\! U^d(\ejo))W_{T_d-N}(e^{j\omega}),   
\end{align}
holds for all $\omega \in \Omega$. Meanwhile, the equation is trivially satisfied in $\omega \notin \Omega$ as $U^{d}(\ejo)=\mathbf{0}$ and $Y^{d}(\ejo)=\mathbf{0}$ due to \eqref{eq:input_spectra}.

At the same time, considering any finite $\alpha \! \in\! \mathbb{R}^{T_d-N-1}$, \eqref{eq:no_data_eq_extended} still holds, i.e.,
\begin{subequations}
\begin{align}
        (W_L(e^{j\omega}) \!&\otimes\! Y_f(\ejo)) W_{T_d-N}(e^{j\omega}) =\nonumber\\         
        &\pazocal{O}_{L} \pazocal{C}_{\mathrm{det}} \begin{bmatrix}
W_\rho(e^{j\omega}) \otimes U_{\mathrm{ini}}(e^{j\omega})\\
W_\rho(e^{j\omega}) \otimes Y_{\mathrm{ini}}(e^{j\omega})
\end{bmatrix}W_{T_d-N}(e^{j\omega})\nonumber\\
&+ \pazocal{T}_L(W_L(e^{j\omega})\!\otimes\! U_f(\ejo))W_{T_d-N}(e^{j\omega}),   
\end{align}
where 
\begin{align}
    W_{\!\rho}(e^{j\omega})\!\otimes\! U_{\mathrm{ini}\!}(e^{j\omega}) &\!:=\! (W_{\!\rho}(e^{j\omega}) \!\otimes\! U^d(e^{j\omega})) W_{T_d-N\!}(e^{j\omega}) \alpha,\label{eq:input_def_past}\\
    W_{\!\rho}(e^{j\omega})\!\otimes\! Y_{\mathrm{ini\!}}(e^{j\omega}) &\!:=\!(W_{\!\rho}(e^{j\omega}) \!\otimes\! Y^d(e^{j\omega})) W_{T_d-N\!}(e^{j\omega}) \alpha,\label{eq:output_def_past}\\
    W_L(e^{j\omega\!})\!\otimes\! U_f(e^{j\omega\!}) &\!:=\! (W_L(e^{j\omega\!}) \!\otimes\! U^d(e^{j\omega\!})) W_{T_d-N\!}(e^{j\omega\!}) \alpha,\label{eq:input_def_future}\\
    W_L(e^{j\omega\!})\!\otimes\! Y_f(e^{j\omega\!}) &\!:=\!(W_L(e^{j\omega\!}) \!\otimes\! Y^d(e^{j\omega\!})) W_{T_d-N\!}(e^{j\omega\!}) \alpha.\label{eq:output_def_future}
\end{align}
\end{subequations}
From these relations, it is clear that the signals $U_{\mathrm{ini}}(e^{j\omega})$, $Y_{\mathrm{ini}}(e^{j\omega})$, $U_{f}(e^{j\omega})$, and $Y_{f}(e^{j\omega})$ are null for $\omega \notin \Omega$ irrespective of the (finite) value taken by $\alpha$, as formalized in the following proposition.
\begin{proposition}\label{th:consistency}
    For any $\alpha\in\mathbb{R}^{T_d-N+1}$ and $\omega\in\Omega$, the vector $Y_f(\ejo)$ in \eqref{eq:output_def_future} verifies 
\begin{align}
        W_L(e^{j\omega}) \!\otimes\! Y_f(\ejo) =&\pazocal{O}_{L} \pazocal{C}_{\mathrm{det}} \begin{bmatrix}
W_\rho(e^{j\omega}) \otimes U_{\mathrm{ini}}(e^{j\omega})\\
W_\rho(e^{j\omega}) \otimes Y_{\mathrm{ini}}(e^{j\omega})
\end{bmatrix}\nonumber\\
&+ \pazocal{T}_L(W_L(e^{j\omega})\!\otimes\! U_f(\ejo)),   
\end{align}
    where $U_{\mathrm{ini}}(\ejo)$, $Y_{\mathrm{ini}}(\ejo)$, and $U_f(\ejo)$ satisfy \eqref{eq:input_def_past}-\eqref{eq:input_def_future}. Meanwhile, $U_{\mathrm{ini}}(\ejo)$, $ Y_{\mathrm{ini}}(\ejo)$, $U_f(\ejo)$, and $Y_f(\ejo)$ are null for $\omega\notin\Omega$.
\end{proposition}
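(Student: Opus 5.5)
The plan is to obtain the claimed identity from the data equation \eqref{eq:data_eq_extended} by linearity in $\alpha$, and to obtain the out-of-band statement directly from the definitions \eqref{eq:input_def_past}--\eqref{eq:output_def_future} together with \eqref{eq:input_spectra}.

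\textbf{Step 1 (identity in band).} Fix $\omega\in\Omega$. The data satisfy \eqref{eq:data_eq_extended}. This is an equality between matrices: both sides are right-multiplied by the row vector $W_{T_d-N}(\ejo)$, and each column corresponds to one shifted window of the data, i.e. one column of the Hankel matrices in \eqref{eq:Willems_model}. Right-multiply both sides by the (finite) vector $\alpha$. The products $\pazocal{O}_L\pazocal{C}_{\mathrm{det}}$ and $\pazocal{T}_L$ are constant left factors, so they commute with the right multiplication.

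\textbf{Step 2 (substitution).} Group the terms using the definitions \eqref{eq:input_def_past}--\eqref{eq:output_def_future}:
\begin{itemize}
\item $(W_L\otimes Y^d)W_{T_d-N}\alpha$ becomes $W_L\otimes Y_f$;
\item the stacked past block becomes $\mathrm{col}(W_\rho\otimes U_{\mathrm{ini}},\,W_\rho\otimes Y_{\mathrm{ini}})$;
\item $\pazocal{T}_L(W_L\otimes U^d)W_{T_d-N}\alpha$ becomes $\pazocal{T}_L(W_L\otimes U_f)$.
\end{itemize}
This gives exactly the displayed identity. In effect, the identity says that the set of vectors satisfying the linear data relation is closed under linear combinations. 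In the time domain this is the same statement as: a linear combination of shifted trajectories of \eqref{eq:true_dom_t} is again a trajectory, with initial state fixed by \eqref{eq:state_past_data_time}.

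\textbf{Step 3 (out of band).} For $\omega\notin\Omega$, \eqref{eq:input_spectra} gives $U^d(\ejo)=\mathbf{0}$. Since the system is BIBO stable with zero initial state, $Y^d(\ejo)=G(\ejo)U^d(\ejo)=\mathbf{0}$. Then every right-hand side of \eqref{eq:input_def_past}--\eqref{eq:output_def_future} is a Kronecker product with a zero vector, times $W_{T_d-N}\alpha$, so it vanishes for every finite $\alpha$.

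To pass from $W_\rho\otimes U_{\mathrm{ini}}=\mathbf{0}$ to $U_{\mathrm{ini}}=\mathbf{0}$, note that $W_\rho(\ejo)$ and $W_L(\ejo)$ have unit-modulus entries. A Kronecker product $w\otimes v$ with $w\neq\mathbf{0}$ is zero only if $v=\mathbf{0}$. Hence $U_{\mathrm{ini}}$, $Y_{\mathrm{ini}}$, $U_f$, $Y_f$ are all null for $\omega\notin\Omega$.

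\textbf{Main obstacle.} The difficulty is well-posedness of the definitions \eqref{eq:input_def_past}--\eqref{eq:output_def_future}. Their left-hand sides must be of Kronecker form $W\otimes V$ for some vector $V$, and this is not automatic for an arbitrary $\alpha$. I would sidestep the issue: treat these equations as defining the stacked vectors, interpreting $U_{\mathrm{ini}}$ etc. through their stacked versions, and prove the proposition at that level. The identity and the nullity argument in Steps 1--3 only ever use the stacked quantities, so nothing else is needed.

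A minor point is that the paper writes $\alpha\in\mathbb{R}^{T_d-N-1}$ in one place but $\alpha\in\mathbb{R}^{T_d-N+1}$ in the statement. The dimension must match the number of columns of $W_{T_d-N}$, i.e. $T_d-N+1$; the argument is indifferent to this choice.
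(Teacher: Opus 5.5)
Your proposal is correct and follows the paper's own argument. The paper's proof is one line: right-multiply the data equation \eqref{eq:data_eq_extended} by $\alpha$, regroup using \eqref{eq:input_def_past}--\eqref{eq:output_def_future}, and read off nullity outside $\Omega$ from $U^d(\ejo)=Y^d(\ejo)=\mathbf{0}$. Your ``main obstacle'' is not actually one: $W_{T_d-N}(\ejo)\alpha$ is a scalar, so $U_{\mathrm{ini}}(\ejo)=U_f(\ejo)=(W_{T_d-N}(\ejo)\alpha)\,U^d(\ejo)$, and likewise for the outputs, is automatically well defined and no stacked-level workaround is needed.
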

\begin{proof}
    The proof straightforwardly follows from \eqref{eq:data_eq_extended} and the definitions \eqref{eq:input_def_past}-\eqref{eq:output_def_future}.
\end{proof}
By stacking the relationships \eqref{eq:input_def_past}-\eqref{eq:output_def_future} and taking the Inverse Discrete Fourier Transform (IDFT), we lastly obtain
\begin{subequations}
\begin{equation}\label{eq:Willems_model_2}
\begin{bmatrix}
        U_{P}\\
        Y_{P}\\
        U_{F}\\
        Y_{F}
    \end{bmatrix}\alpha=\begin{bmatrix}
    u_{[-\rho,-1]}\\
    y_{[-\rho,-1]}\\
    u_{[0,L-1]}\\
    y_{[0,L-1]}
    \end{bmatrix},
\end{equation}
where   
\begin{equation}\label{eq:Hankels_for_DeePC}
\!\!\begin{bmatrix}
    U_{P}\\
    U_{F}
\end{bmatrix}\!=\!\pazocal{H}_{\rho+L}(u_{[0,T_d-1]}^{d}),~\begin{bmatrix}
    Y_{P}\\
    Y_{F}
\end{bmatrix}\!=\!\pazocal{H}_{\rho+L}(y_{[0,T_d-1]}^{d}).
\end{equation}
\end{subequations}
We have thus obtained the predictor used in DeePC, which corresponds to that introduced in \eqref{eq:Willems_model} (up to a $\rho$-step delay on the reconstructed signals). From a different perspective, we once again showed that predicted inputs and outputs at frequencies not excited during data collection will be null. 
\section{Illustrative Examples}\label{sec:Numerical}
\begin{table}[!tb]
\centering
\caption{Parameters of the LTI systems in \eqref{eq:sys_family}.}
\label{tab:sys_params}
    \begin{tabular}{l c c c c c}
    \hline
     & \! System 1 & \! System 2 & System 3 & System 4 & System 5\\
    \hline
    $\theta_1$ & 0 & 0 & 0 & 1.000 & 1.000\\
    $\theta_2$ & 0.009 & 0.200 & 0  & -1.176 & -0.219\\
    $\theta_3$ & 0.009 & -0.370 & -0.400  & 1.000 & 0.490\\
    $\theta_4$ & 0 & 0 & 0  & 1.000 & 1.000\\
    $\theta_5$ & 0 & 0 & 0  & -2.203 & -1.012\\
    $\theta_6$ & 1.000 & 1.000 & 1.000  & 2.296 & 0.312\\
    $\theta_7$ & -1.724 & -1.600 & -1.600  & -1.496 & -0.338\\
    $\theta_8$ & 0.741 & 0.698 & 0.870 & 0.518 & 0.508\\
    \hline
    \vspace{0.2mm}
    \end{tabular}
\end{table}

\begin{table}[!tb]
    \centering
    \caption{Upper and lower limits of the excitation bandwidth $\Omega=[\omega_{\mathrm{L}},\omega_{\mathrm{H}}]$ excited with
a multisine input $u^d_{[0,T_d-1]}$.}
    \label{tab:scenario_freq_range}
    \begin{tabular}{c c c c c}
    \hline
     & WN & IBW & IBN & OB \\
    \hline
    $\omega_{\mathrm{L}}$ & 0 & $0.01\pi$ & $0.01\pi$ & 0.4$\pi$ \\
    $\omega_{\mathrm{H}}$ & $\pi$ & 0.4$\pi$ & 0.3$\pi$ & 0.76$\pi$ \\
    \hline
    \end{tabular}
\end{table}
\begin{figure}[!tb]
    \centering
    \includegraphics[width=0.485\textwidth]{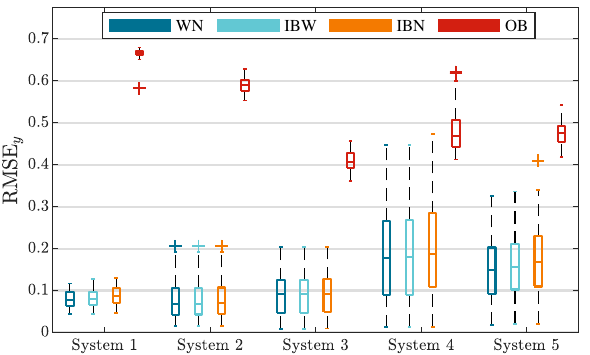}
\caption{(\emph{DeePC}) Box-plots for the output-tracking RMSEs across the tested systems for the different excitation signals.}
    \label{fig:rmse_boxplots_all_systems_DeePC}\vspace{-.4cm}
\end{figure}
\begin{figure*}[!tb]
\centering
\begin{tabular}{cccc}
    \includegraphics[scale=0.44]{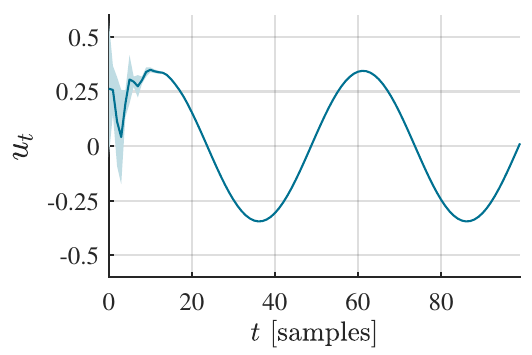} &   \includegraphics[scale=0.44]{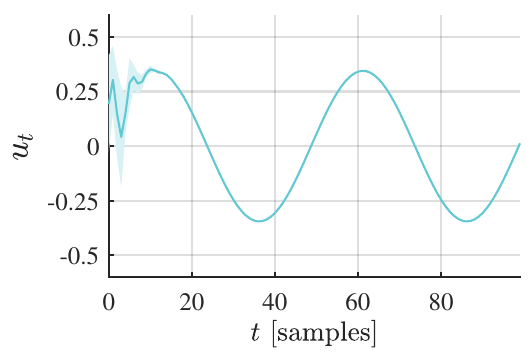}&
    \includegraphics[scale=0.44]{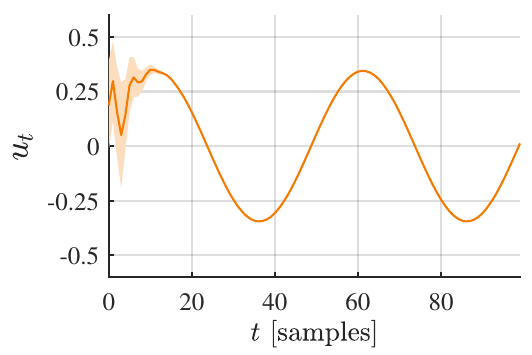} & \includegraphics[scale=0.44]{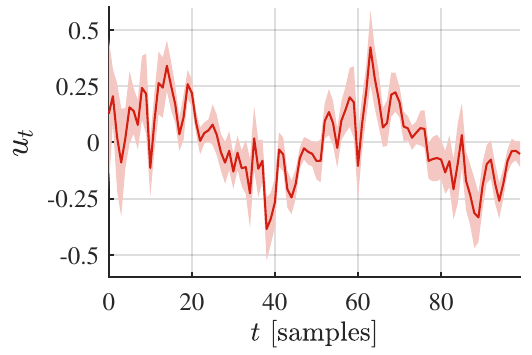}\\
    \includegraphics[scale=0.44]{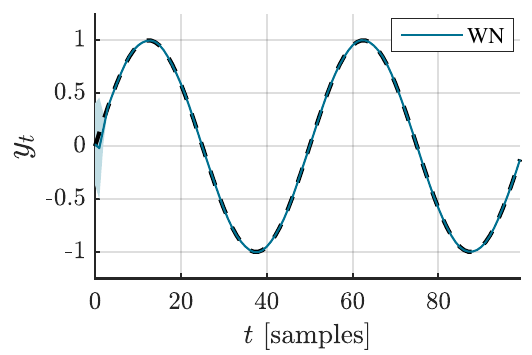} &   \includegraphics[scale=0.44]{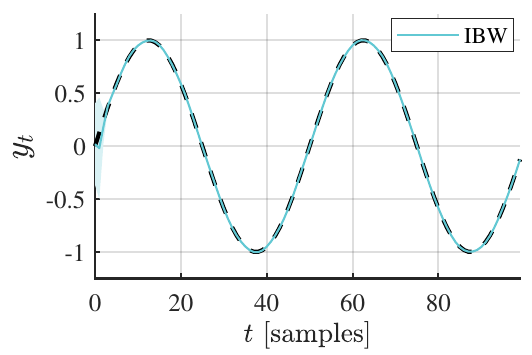}&
    \includegraphics[scale=0.44]{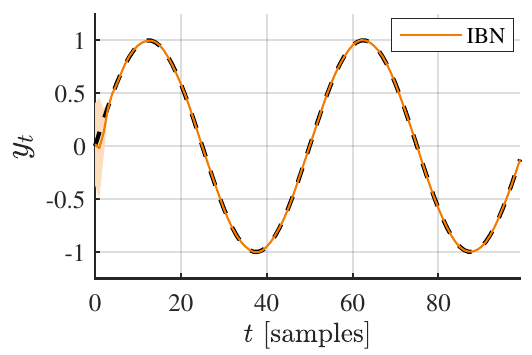} & \includegraphics[scale=0.44]{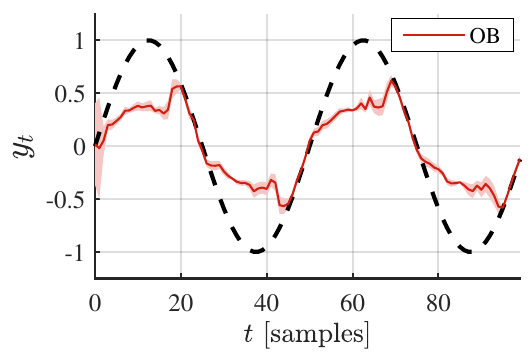}
\end{tabular}
    \caption{(\emph{DeePC}) Tracking performance: mean values (colored line) and standard deviation (shaded area) achieved in closed-loop over $50$ simulations with different initial conditions. Outputs are compared against the reference to be tracked (black).}
    \label{fig:mean_CI_traj_DeePC}
\end{figure*}
\begin{figure*}[!tb]
\centering
\begin{tabular}{ccc}
    \includegraphics[scale=0.44]{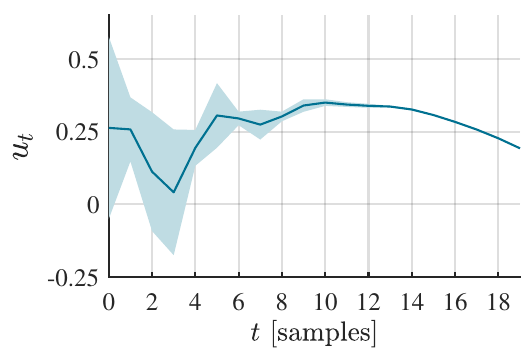} &   \includegraphics[scale=0.44]{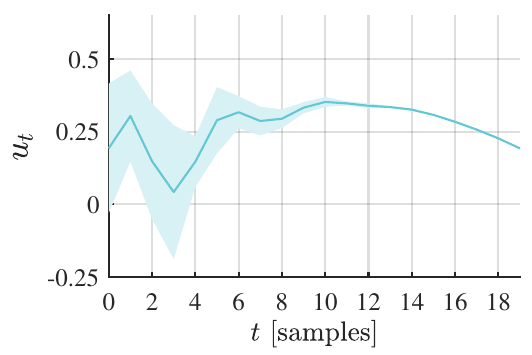}&
    \includegraphics[scale=0.44]{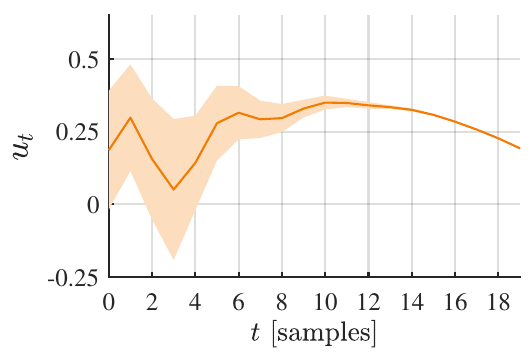}\\
    \includegraphics[scale=0.44]{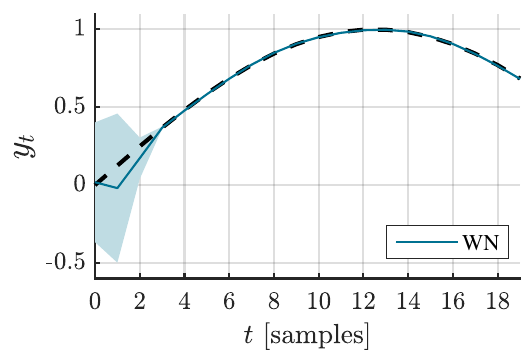} &   \includegraphics[scale=0.44]{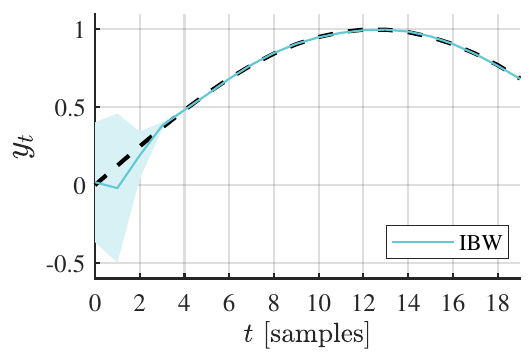}&
    \includegraphics[scale=0.44]{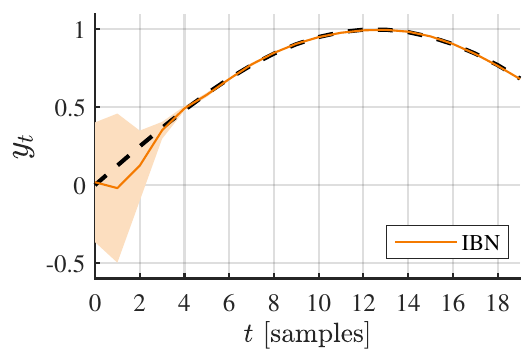}
\end{tabular}
    \caption{(\emph{DeePC}) Zoomed-in view of the first $20$ steps of the closed-loop simulations for the WN, IBW, and IBN cases.}
    \label{fig:mean_CI_traj_DeePC_zoom}
\end{figure*}
Let us now consider\footnote{The code to reproduce our experiments is available at \url{https://github.com/LuChuyu1012/Systemic-Perspective-DeePC}.} a set of LTI systems whose input/output relationship is represented as in \eqref{eq:true_dom_t}, with these systems characterized as
\begin{equation}\label{eq:sys_family}
    G(q)=\frac{\theta_1q^2+\theta_2q+\theta_3}{\theta _4q^4+ \theta_5q^3+\theta_6q^2+\theta_7q+\theta_8},
\end{equation}
for the parameters reported in Table \ref{tab:sys_params}.

Our control goal is for each system to track an output reference $y_t^{\mathrm{o}}=\sin(\omega^{\mathrm{o}}t)$ with $\omega^{\mathrm{o}}=0.04\pi$, of which we assume to have a preview. To this end, we collect $T_d=200$ steady-state input/output samples by exciting each system with a Schroeder multisine input (see \cite[Section 5.3.1.2]{pintelon2012system}) to excite frequencies within a limited band $\Omega=[\omega_L,\omega_H]$. As shown in Table~\ref{tab:scenario_freq_range}, these frequencies are selected to evaluate three scenarios:  $\omega^\mathrm{o} \in \Omega$ and the excited band is relatively large (IBW);  $\omega^\mathrm{o} \in \Omega$ but we narrow the excitation band (IBN); $\omega^\mathrm{o} \notin \Omega$, i.e., the frequency characterizing the control reference is not excited (OB). These are compared against a white noise (WN) excitation\footnote{We extract the probing signal samples from a standard normal distribution.}.\\
The available data are used to construct the data-driven predictor in \eqref{eq:Willems_model} for $N=L+\rho$, where $L=15$ represents the prediction horizon, and $\rho=8$ is the past horizon used to characterize the initial conditions of the DeePC problem. In particular, by recasting the input and output Hankel data matrices in the predictor as in \eqref{eq:Hankels_for_DeePC}, we solve at each instant $t$ the following optimization problem
\begin{subequations}\label{eq:DeePC}
\begin{align}
&\underset{\substack{u_{f},~y_{f},~\alpha_t}}{\mathrm{minimize}}~~J(u_{f},y_{f};y_f^{\mathrm{o}}(\omega^{\mathrm{o}}))\\
&\qquad~~\mbox{s.t. }~\begin{bmatrix}
    U_{P}\\
    Y_{P}\\
    U_{F}\\
    Y_{F}
\end{bmatrix}\alpha_t=\begin{bmatrix}
z_{\mathrm{ini},t}\\
u_{f}\\
y_{f}
\end{bmatrix},\label{eq:DeePC_predictor}\\
& \qquad \qquad ~~~ u_{k} \in \mathbb{U},~~\forall k \in [0,L-1],
\end{align}
	where $\mathbb{U}\!=\![-2,2]$, $u_{f}\!=\!u_{[0,L-1]}$ and $y_{f}\!=\!y_{[0,L-1]}$ are the predicted inputs/outputs, $y_{f}^{\mathrm{o}}(\omega^{\mathrm{o}})\!=\!y_{[0,L-1]}^{\mathrm{o}}(\omega^{\mathrm{o}})$ is the previewed reference signal over the prediction horizon,   
\begin{equation}
    z_{\mathrm{ini},t}=\begin{bmatrix}
        u_{[t-\rho,t-1]}^{\top} & y_{[t-\rho,t-1]}^{\top} 
    \end{bmatrix}^{\top}\!\in \mathbb{R}^{(m+p)\rho}
\end{equation}
is the trajectory of past inputs and outputs of length $\rho$ representing the initial condition, and the cost function is
\begin{equation}\label{eq:DeePCloss}
		J(u_f,y_f;y_f^{\mathrm{o}}(\omega^{\mathrm{o}}))\!=\!\!\sum_{k=0}^{L-1}\!\left[\|y_{k}\!-\!y_{k}^{\mathrm{o}}(\omega^{\mathrm{o}}) \|_{\pazocal{Q}}^{2\!}\!+\!\|u_{k}\|_{\pazocal{R}}^{2}\right]\!,
\end{equation}
with $\pazocal{Q}=100I_p$ and $\pazocal{R}=I_m$. 
\end{subequations}
Once the problem is solved, we apply the first $m$ elements of $u_f$ to the actual system, update the initial condition based on its response, and then solve the problem again in a receding horizon fashion.

We compare the performance achieved with the different data collection schemes over a simulation horizon of $T_c=200$ steps, considering $50$ initial conditions $z_{\mathrm{ini},0}$ obtained by applying input sequences of length $\rho$ randomly drawn from a uniform distribution in the interval $[-1,1]$ and memorizing the corresponding outputs. This comparison is qualitatively performed by computing the Root Mean Square tracking Error, i.e.,
\begin{equation}
    \mathrm{RMSE}_{y}=\sqrt{\frac{1}{T_c}\sum_{t=0}^{T_c-1}(y_t-y_{t}^{\mathrm{o}})^{2}},
\end{equation} 
for each of the initial conditions tested. As shown in the boxplot in \figurename{~\ref{fig:rmse_boxplots_all_systems_DeePC}}, even when the data are PE, the tracking error changes substantially and worsens when the frequencies excited during data collection do not include that of the reference to be tracked. This is confirmed by the closed-loop trajectories reported in \figurename{~\ref{fig:mean_CI_traj_DeePC}}, highlighting poor tracking performance for the OB case, as well as similar inputs and outputs for the other cases. The inputs and outputs in the former scenario suggest that DeePC attempts to numerically narrow the gap between the ideal and actual predictions at the non-excited target frequency. Yet, this results in a visible deterioration in performance and a more nervous control input. These plots further hint that, in this (ideal) noise-free case study, the IBW and IBN scenarios achieve performance similar to the baseline WN across all systems. At the same time, as expected, the performance achieved in the latter scenario is still slightly better than that achieved in the others. This slight improvement can be visualized in \figurename{~\ref{fig:mean_CI_traj_DeePC_zoom}}, where we zoom in on the first $20$ steps of our closed-loop simulation for the WN, IBW, and IBN cases. These plots, and in particular the comparison of the inputs in the WN and IBW cases, indicate that narrowing the excitation band slightly worsens the transient response of the closed-loop system.
\subsection{Comparison with subspace predictive control}
\begin{figure}[!tb]
    \centering
    \includegraphics[width=0.485\textwidth]{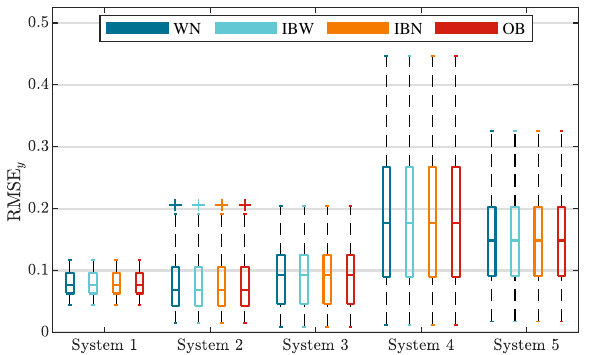}
\caption{(\emph{SPC}) Box-plots for the output-tracking RMSEs across the tested systems for the different excitation signals.}
    \label{fig:rmse_boxplots_all_systems_SPC}
\end{figure}
\begin{figure}[!tb]
\centering
\begin{tabular}{cc}
    \hspace*{-.3cm} \includegraphics[scale=0.44]{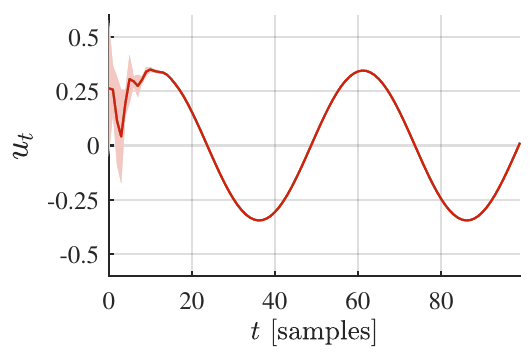} &
    \includegraphics[scale=0.44]{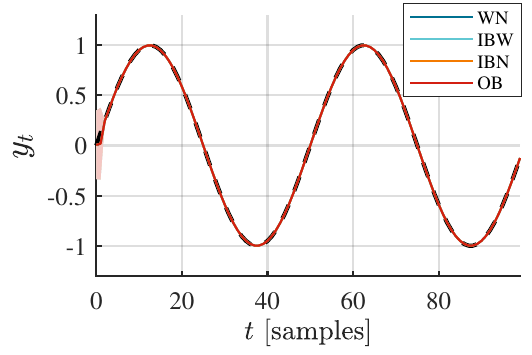}
\end{tabular}
    \caption{(\emph{SPC}) Tracking performance: mean values (colored line) and standard deviation (shaded area) achieved in closed-loop over $50$ simulations with different initial conditions. Outputs are compared against the reference to be tracked (black).}
    \label{fig:mean_CI_traj_SPC}
\end{figure}
We now compare the performance of DeePC with that of a Subspace Predictive Controller (SPC)~\cite{favoreel1999spc} for the different data collections already considered. Note that the only difference between DeePC and SPC lies in the use of the predictor  
\begin{equation}\label{eq:SPC_predictor}
    y_f =Y_{F}\begin{bmatrix}
    Y_{P}\\
    U_{P}\\
    U_{F}\\
\end{bmatrix}^{\dagger}\begin{bmatrix}
z_{\mathrm{ini},t}\\
u_{f}
\end{bmatrix},
\end{equation}
in place of \eqref{eq:DeePC_predictor} within \eqref{eq:DeePC}. \figurename{~\ref{fig:rmse_boxplots_all_systems_SPC}} shows that SPC achieves the same tracking performance independently of the probing signal for all the systems. This \textquotedblleft insensitivity\textquotedblright \ to the probing signal is further visible in \figurename{~\ref{fig:mean_CI_traj_SPC}}, where it is clear that closed-loop trajectories achieved for different excitation signals overlap. Meanwhile, a closer inspection of the left most set of plots in \figurename{~\ref{fig:mean_CI_traj_DeePC}} and \figurename{~\ref{fig:mean_CI_traj_SPC}} shows that the input/output trajectories attained with DeePC and white noise excitation correspond to those achieved with SPC, in line with existing theoretical results (see~\cite{fiedler2021relationship}). This result might be linked to the fact that the SPC predictor is the result of a least square estimation procedure (see \cite[Section 3.2]{favoreel1999spc}), with the data being thus preprocessed before they are employed for control design. Instead, data in the DeePC predictor are not undergoing any manipulation before being used for control purposes. Hence, this comparison highlights that PE in Definition \ref{def:persistency_excitation_rank} is enough when data are preprocessed. Meanwhile, they suggest that data collection must fulfill additional requirements, dictated by the control goal, if this preliminary preprocessing step is not performed.

\section{Conclusions}\label{sec:conclusions}
By analyzing the predictor at the core of data-enabled predictive control from a frequency perspective, in this work, we have attempted to shed light on the impact of limited-band excitation on the features of the predicted inputs and outputs as well as on closed-loop performance. Our results highlight the importance of accounting for control goals when collecting data, providing initial indications for experiment design approaches. 

Future research directions indeed include using these insights for developing experiment design strategies tailored for data-driven predictive control, as well as extending them to a noisy setting.

\bibliography{Main.bib}

@article{willems2005note,
  title={A note on persistency of excitation},
  author={Willems, Jan C and Rapisarda, Paolo and Markovsky, Ivan and De Moor, Bart LM},
  journal={Systems \& Control Letters},
  volume={54},
  number={4},
  pages={325--329},
  year={2005},
  publisher={Elsevier}
}

@article{willems1991paradigms,
  title={Paradigms and puzzles in the theory of dynamical systems},
  author={Willems, Jan C},
  journal={IEEE Transactions on Automatic Control},
  volume={36},
  number={3},
  pages={259--294},
  year={1991}
}

@book{pintelon2012system,
  title={System identification: a frequency domain approach},
  author={Pintelon, Rik and Schoukens, Johan},
  year={2012},
  publisher={John Wiley \& Sons}
}

@article{vakilzadeh2015experiment,
  title={Experiment design for improved frequency domain subspace system identification of continuous-time systems},
  author={Vakilzadeh, Majid Khorsand and Yaghoubi, Vahid and McKelvey, Tomas and Abrahamsson, Thomas and Ljung, Lennart},
  journal={IFAC-PapersOnLine},
  volume={48},
  number={28},
  pages={886--891},
  year={2015},
  publisher={Elsevier}
}

@inproceedings{coulson2019data,
  title={Data-enabled predictive control: In the shallows of the DeePC},
  author={Coulson, Jeremy and Lygeros, John and D{\"o}rfler, Florian},
  booktitle={2019 18th European Control Conference (ECC)},
  pages={307--312},
  year={2019},
  organization={IEEE}
}

@article{berberich2020data,
  title={Data-driven model predictive control with stability and robustness guarantees},
  author={Berberich, Julian and K{\"o}hler, Johannes and M{\"u}ller, Matthias A and Allg{\"o}wer, Frank},
  journal={IEEE Transactions on Automatic Control},
  volume={66},
  number={4},
  pages={1702--1717},
  year={2020},
  publisher={IEEE}
}

@article{breschi2023data,
  title={Data-driven predictive control in a stochastic setting: a unified framework},
  author={Breschi, Valentina and Chiuso, Alessandro and Formentin, Simone},
  journal={Automatica},
  volume={152},
  pages={110961},
  year={2023},
  publisher={Elsevier}
}

@article{dorfler2022bridging,
  title={Bridging direct and indirect data-driven control formulations via regularizations and relaxations},
  author={D{\"o}rfler, Florian and Coulson, Jeremy and Markovsky, Ivan},
  journal={IEEE Transactions on Automatic Control},
  volume={68},
  number={2},
  pages={883--897},
  year={2022},
  publisher={IEEE}
}

@article{yin2021maximum,
  title={Maximum likelihood estimation in data-driven modeling and control},
  author={Yin, Mingzhou and Iannelli, Andrea and Smith, Roy S},
  journal={IEEE Transactions on Automatic Control},
  volume={68},
  number={1},
  pages={317--328},
  year={2021},
  publisher={IEEE}
}

@book{van2012subspace,
  title={Subspace identification for linear systems: Theory-Implementation-Applications},
  author={Van Overschee, Peter and De Moor, BL0888},
  year={2012},
  publisher={Springer Science \& Business Media}
}

@book{ljung1987theory,
  title={System identification: Theory for the user},
  author={Ljung, Lennart},
  year={1999},
  publisher={Prentice-Hall, Inc.}
}

@article{narasimhan2003multi,
  title={Multi-objective input signal design for plant-friendly identification},
  author={Narasimhan, Sridharakumar and Srinivasan, Ranganathan and Rengaswamy, Raghunathan},
  journal={IFAC Proceedings Volumes},
  volume={36},
  number={16},
  pages={897--902},
  year={2003},
  publisher={Elsevier}
}

@article{rivera2003plant,
  title={"{Plant-Friendly}" system identification: a challenge for the process industries},
  author={Rivera, Daniel E and Lee, Hyunjin and Braun, Martin W and Mittelmann, Hans D},
  journal={IFAC Proceedings Volumes},
  volume={36},
  number={16},
  pages={891--896},
  year={2003},
  publisher={Elsevier}
}

@article{van2021beyond,
  title={Beyond persistent excitation: Online experiment design for data-driven modeling and control},
  author={van Waarde, Henk J},
  journal={IEEE Control Systems Letters},
  volume={6},
  pages={319--324},
  year={2021},
  publisher={IEEE}
}

@article{alsalti2023design,
  title={On the design of persistently exciting inputs for data-driven control of linear and nonlinear systems},
  author={Alsalti, Mohammad and Lopez, Victor G and M{\"u}ller, Matthias A},
  journal={IEEE Control Systems Letters},
  volume={7},
  pages={2629--2634},
  year={2023},
  publisher={IEEE}
}

@article{gevers2006input,
  title={Input design: From open-loop to control-oriented design},
  author={Gevers, Michel and Bombois, Xavier},
  journal={IFAC Proceedings Volumes},
  volume={39},
  number={1},
  pages={1329--1334},
  year={2006},
  publisher={Elsevier}
}

@article{hjalmarsson2005experiment,
  title={From experiment design to closed-loop control},
  author={Hjalmarsson, H{\aa}kan},
  journal={Automatica},
  volume={41},
  number={3},
  pages={393--438},
  year={2005},
  publisher={Elsevier}
}

@inproceedings{iannelli2021design,
  title={Design of input for data-driven simulation with Hankel and Page matrices},
  author={Iannelli, Andrea and Yin, Mingzhou and Smith, Roy S},
  booktitle={2021 60th IEEE Conference on Decision and Control (CDC)},
  pages={139--145},
  year={2021},
  organization={IEEE}
}

@article{favoreel1999spc,
  title={{SPC}: Subspace predictive control},
  author={Favoreel, Wouter and De Moor, Bart and Gevers, Michel},
  journal={IFAC Proceedings Volumes},
  volume={32},
  number={2},
  pages={4004--4009},
  year={1999},
  publisher={Elsevier}
}

@inproceedings{fiedler2021relationship,
  title={On the relationship between data-enabled predictive control and subspace predictive control},
  author={Fiedler, Felix and Lucia, Sergio},
  booktitle={2021 European Control Conference (ECC)},
  pages={222--229},
  year={2021},
  organization={IEEE}
}

@article{meijer2025frequency,
  title={From a Frequency-Domain Willems' Lemma to Data-Driven Predictive Control},
  author={Meijer, Tomas J and Scheres, Koen JA and Nouwens, Sven AN and Dolk, Victor S and Heemels, WPMH},
  journal={arXiv preprint arXiv:2501.19390},
  year={2025}
}

@inproceedings{ferizbegovic2021willems,
  title={Willems' fundamental lemma based on second-order moments},
  author={Ferizbegovic, Mina and Hjalmarsson, H{\aa}kan and Mattsson, Per and Sch{\"o}n, Thomas B},
  booktitle={2021 60th IEEE Conference on Decision and Control (CDC)},
  pages={396--401},
  year={2021},
  organization={IEEE}
}
\end{document}